\newtheorem{lemma}{Lemma}[section]
\newenvironment{proof} [1] {\begin{trivlist} 
  \item [\hskip \labelsep {\bfseries Proof}]~#1} {\end {trivlist}}
\begin{document}
\nocite{*}
\bibliographystyle{plain}
\thispagestyle{plain}
\begin{center}
  {\Large {\bf General Iteration graphs and Boolean automata circuits}}

  \vspace{1cm}
  {\large Mathilde Noual$^{1,2}$}\\

  \vspace{1cm}
  {\small
  \begin{itemize}
    \item[$^2$] Universit\'e de Lyon, \'ENS-Lyon, LIP, CNRS UMR5668, 69007 Lyon,
    France\vspace{-5pt}
    \item[$^4$] IXXI, Institut rh\^one-alpin des syst\`emes complexes, 69007 Lyon,
    France\vspace{-5pt}
  \end{itemize}}
\end{center}
\centerline{\rule[0pt]{4cm}{0.3pt}}
\vspace{1cm}

\begin{abstract}
  \noindent This article is set in the field of regulation networks modeled by
  discrete dynamical systems. It focuses on {\em Boolean automata networks}. In
  such networks, there are many ways to update the states of every element. When
  this is done deterministically, at each time step of a discretised time flow
  and according to a predefined order, we say that the network is updated
  according to a block-sequential update schedule (blocks of elements are
  updated sequentially while, within each block, the elements are updated
  synchronously). Many studies, for the sake of simplicity and with some
  biologically motivated reasons, have concentrated on networks updated with one
  particular block-sequential update schedule (more often the
  synchronous/parallel update schedule or the sequential update schedules). The
  aim of this paper is to give an argument formally proven and inspired by
  biological considerations in favour of the fact that the choice of a
  particular update schedule does not matter so much in terms of the possible
  and {\em likely} dynamical behaviours that networks may display. 
\end{abstract}
\noindent {\em Keywords:} Discrete dynamical system, regulation network,
positive and negative circuit, asymptotic dynamical behaviour, attractor,
potential, update schedule.

\section{Introduction}
As many studies have already
emphasised~\cite{comp,Julio_rob,these_adrien,BScircuits}, amongst the features
of discrete models of regulation networks that impact significantly on their
dynamical behaviour are their update schedules. Generally speaking, an update
schedule specifies when the elements of the network are updated throughout the
flow of discretised time. Block-sequential update schedules are amongst the most
famous. Their characteristic is to update all elements deterministically and
exactly during a fixed amount of time. It is well known that different
block-sequential update schedules yield different dynamical behaviours of
networks. Thus, the choice of one specific block-sequential update schedule
cannot be justified reasonably solely by its definition itself.  From the
biological point of view, the lack of knowledge concerning the order of gene
regulations does not help either in giving an argument in favour of one
iteration mode rather than another. Biologists, however, tend to agree that the
probability that all genes involved in a same cellular physiological function
evolve synchronously is almost null, particularly considering the plausible
presence of noise. Furthermore, it does not seem reasonable to think that all
genes (and their expressions) are subjected to a particular genetic biological
clock and that the biological clocks of all genes are synchronised although a
total lack of synchronicity appears to be on the whole rather unlikely as well.
\medskip

In order to bypass the problem of the choice of update schedule, we suggest here
to show that many of the dynamical behaviours induced by particular
block-sequential update schedules are in fact meaningless artefacts in a sense
that we will clarify. On the contrary, certain sets of network configurations
are indeed stable enough to resist to likely perturbations of the update
order. Fixed points are well-known and simple examples of these stable
configurations. To point this out, we define general iteration graphs that
contain all the information concerning all possible deterministic dynamical
behaviours of a network.   
\medskip

Our study is carried out on the discrete models of regulation networks that are
Boolean automata networks. These networks are described in Section~\ref{def}. In
section~\ref{appl}, with the intent of proving the pertinence of general
iteration graphs, we analyse the general iteration graph of a real
network. Section~\ref{circ} focuses on Boolean automata circuits, that is
networks whose underlying structures are circuits.  Special attention is payed
to these particular networks in this document. The reason is that circuits, as
Thomas discovered in 1981~\cite{Thomas1981}, play an important role in the dynamics of
networks containing them.  One way to see this is to note that a network whose
underlying interaction graph is a tree or more generally a graph without
circuits can only eventually end up in a configuration that will never change
over time (a fixed point). A network with retroactive loops, on the contrary,
will exhibit more diverse dynamical behaviour patterns. 
Thomas~\cite{Thomas1981} formulated conjectures concerning the role of positive
({\it i.e.,} with an even number of inhibitions) and negative ({\it i.e.,} with
an odd number of inhibitions) circuits in the dynamics of regulation
networks. At the end of section~\ref{circ}, we discuss how our work agrees with
them.  

\section{Definitions, and notations}
\label{def}
\subsection*{Boolean automata networks}
We define a {\bf Boolean automata network} to be a couple $N=(G,{\cal F})$ where
$G=(V,A)$ is the {\bf interaction graph} of the network and ${\cal F}=\{f_i\ |\
i\in V\}$ is its {\bf set of local transition functions}.  The elements of the
network are represented by the nodes of $G$ (in the sequel we will sometimes
identify a network with its interaction graph). Each one of them has a Boolean
state that may change over time. It is either active (its state is $0$) or
inactive (its state is $1$).  If the size of $N$ is $n$, {\it i.e.,} if $|V|=n$,
vectors in $\{0,1\}^n$ are called {\bf configurations} or {\bf (global) states}
of $N$.  In the interaction graph $G$, an arc $(i,j)\in A$ indicates that the
state of the element or node $j\in V$ depends on that of the node $i\in V$. The
local transition function $f_j:\{0,1\}^n\to\{0,1\}\in {\cal F}$ indicates
how. More precisely, starting in a configuration $x\in \{0,1\}^n,$ if the state
$x_j$ of node $j$ is updated, it becomes
$$x'_j=f_j(x).$$
The local transition functions of a network are 
supposed to be monotonous and if $(i,j)\in A$, then, when
\begin{multline*}
  \forall (x_0,\ldots,x_{n-1})\in\{0,1\}^n,\\
  f_j(x_0,\ldots,x_i=1,\ldots,x_{n-1}) \geq
  f_j(x_0,\ldots,x_i=0,\ldots,x_{n-1}),
\end{multline*}
node $i$ is said to be an {\bf activator} of node $j$ and the arc $(i,j)\in A$
is said to be {\bf positive} (it is labeled by $\oplus$). Otherwise, $i$ is said
to be an {\bf inhibitor} of $j$ and the arc $(i,j)\in A$ is said to be {\bf
  negative} (it is labeled by $\ominus$).  As the function $f_j$ only depends on
the coefficients $x_i$ of $x$ that correspond to states of incoming neighbours
$i$ of node $j$, we consider it to be a function of
$\{0,1\}^{|V_j^-|}\to\{0,1\}$, where $V_j^-=\{i\ |\ (i,j)\in A\}$, and we write
the following:
\begin{equation}
\label{update}
x_j'=f_j(x_i\ |\ i\in V_j^-). 
\end{equation}

\subsection*{Block-sequential update schedules}
The elements and the interactions of a network being specified, one last point
needs to be clarified in order to define completely a network and in particular
the way it evolves over time, that is, {\em when} is the state of each node $j$
updated?  \medskip

Networks are often associated to an {\bf update} or {\bf iteration schedule}
that specifies the order according to which the nodes are updated. One of the
most common deterministic update schedules are {\bf block-sequential update
  schedules}. These update schedules can be defined formally by a function $s:
V\to \{0,\ldots, |V|-1\}$. Then, $s(i)$ represents the date of update of node
$i$ {\em within} one unitary time step ({\it i.e.,} between a time step $t$ and
the following time step $t+1$). At the end of each time step, all nodes of the
network have been updated exactly once since the previous time step. Without
loss of generality, we suppose that $s$ allows for no ``waiting period'' within
a time step: $min \{s(i)\ |\ i\in V\}= 0$ and $\forall t,\ 0\leq t <n-1,\
\exists i\in V,\ s(i)=t+1\ \Rightarrow\ \exists j\in V,\ s(j)=t$.  The update
schedule called {\bf synchronous} or {\bf parallel} is denoted by $\pi$. It is
such that $\forall i\in V,\ \pi(i)=0$.  An update schedule $s$ is said to be
{\bf sequential} when it updates only one node at a time: $\forall i,j\in V$,
$s(i)\neq s(j)$, {\it i.e.,} $\forall t,\ 0\leq t<n,\exists i\in V,\
s(i)=t$. There are $n!$ different sequential update schedules of a set of $n$
nodes. Block-sequential update schedules take there name from the fact that they
define {\bf blocks} $B_k^s=\{i\in V\ |\ s(i)=k\}$ of nodes that are updated
synchronously while the blocks are updated sequentially.  From~\cite{adrien}, we
know that the number of different block-sequential update schedules of size $n$
is given by the following formula:
$${\cal B}(n)=\sum_{k=0}^{n-1} \binom{n}{k}{\cal B}(k)$$
(that is, the number of lists of sets of elements taken in a set of size $n$).
\medskip

As a finite sized network $N=(G,{\cal F})$ has a finite number of
configurations, with a deterministic (block-sequential) update mode $s$, it
necessarily ends up looping on a certain set of configurations ${\cal
  A}\subseteq \{0,1\}^n$ (where $n$ is still the network size). All such sets
${\cal A}$ of configurations that a network may reach after a certain number of
steps and that it can then never leave are called the {\bf attractors} of the
network updated with th specific update schedule $s$. For any initial
configuration $x(0)\in\{0,1\}^n$, let $x(t)$ be the configuration of the network
after $t$ steps of time during which the updates of nodes have been performed
according to the local transition functions of ${\cal F}$ and to the update
schedule $s$. Then, an attractor of $N$ is a set of $p\in \mathbb{N}$
configurations $x(t), x(t+1),\ldots,x(t+p-1)$, such that $\forall t'>0,\
x(t+t')=x(t+t'\ mod\ p)$. If, in addition, there is no $d<p$ satisfying $\forall
t',\ x(t'+ d)=x(t')$, then the attractor is said to have a period equal to $p$.
Attractors of period $1$ are usually called {\bf fixed points} and other
attractors (attractors of period greater than $1$) are called {\bf limit
  cycles}.\medskip

Now, the formalism described above implies that within one unitary time step,
the network undergoes in reality several update stages, one for the update of
each block $B_k^s=\{i\in V\ |\ s(i)=k\}$ (except when the update schedule is the
parallel schedule which defines only one block). Observing this way the network
only at regular intervals corresponding to one unitary time step rather than
observing it each time an update occurs makes sense since during each of these
intervals, the sequence of events that occurs is always the same. However,
although the choice of the moment of observation (after the last block $B_k^s$
has been updated) is {\it a priori} arbitrary, its consequence is to distinguish
update schedules that could reasonably be identified. Indeed, let $s$ and $s'$
be two update functions of $V\to\{0,\ldots,n\}$ such that $s_{max}=max\{s(i)\}$
and $\exists \Delta\in \mathbb{N},\ \forall i\in B^s_k,\ s'(i)=k+\Delta\ mod\
s_{max}$. Obviously, the formal definition of block-sequential update schedules
we gave above distinguishes between $s$ and $s'$ and indeed $s$ and $s'$, seen
as block-sequential update schedules, may in appearance induce very dissimilar
dynamical behaviours of the network: the state of the network just after the
nodes of $B^s_{s_{max}}$ are updated may often be different from the state of
the network just after the nodes of $B_{s_{max}-\Delta}$ are updated.  Yet, when
applied repeatedly, $s$ from $s'$ only differ by the very first element they
update. If we identify all such update schedules, the number of different
block-sequential update schedules becomes the following:
$$
{\cal B}'(n)=\sum_{k=0}^{n}\frac{1}{k}\cdot {\cal S}(n,k)
$$
where ${\cal S}(n,k)=k\cdot ({\cal S}(n-1,k-1)+{\cal S}(n-1,k))$ is the number
of lists of $k$ sets of elements taken in a set of size $n$, {\it i.e.,} the
number of surjections of $\{1,\ldots,n\}\to\{1,\ldots,k\}$.  
\medskip

With this new definition, a network updated with a block-sequential update
schedule needs to be observed after each update of a subset of its
nodes. Keeping this way of observing a network, we generalise the notion of
updating by loosing the condition on the order with which nodes are updated. For
any subset $P\subseteq V$ we define the {\bf global transition
  function relative to $P$}, $F^P:\{0,1\}^n\to\{0,1\}^n$:
$$
\forall x\in \{0,1\}^n,\ \forall i\in V,\ F^P(x)_i=\begin{cases}x_i & \text{ if
  }i\notin P\\ f_i(x_{i-1}) & \text{ if }i\in P.\end{cases}
$$
When $P=V$, we write $F=F^P$ and $F^k=F\circ F^{k-1}$ ($F^1$ being $F$). $F$ is
the global transition function of a network with $n$ elements updated with the
parallel update schedule. In the sequel, we make substantial use of the
following notation:
$$
\forall x\in \{0,1\}^n,\ U(x)=\{i\in V\ |\ x_i\neq f_i(x_{i-1})\} \text{ and }
u(x)=|U(x)|.
$$
In other words, when the network is in configuration $x$, $U(x)$ (which was
called {\em call for updating} and noted $Updx(x)$ in~\cite{Remy}) refers to the
set of nodes that are {\it unstable} in the sense that their states {\it can}
change (but do not necessarily, this depends on the subset $P$).  Note that by
definition of $P$ and $U(x)$:
$$
F^P(x)_i=
\begin{cases}
  f_i(x_{i-1}) & \text{if } i\in P\cap U(x)\\
  x_i &  \text{otherwise.}
\end{cases}
$$

The {\bf general iteration graph} of a network $N$ is a digraph ${\cal D}(N)$
whose nodes are the configurations $x\in\{0,1\}^n$ of $N$ and whose arcs are the
$(x,y)\in \{0,1\}^n\times \{0,1\}^n$ such that $\exists P\subseteq V,\ p\neq
\emptyset\ y=F^P(x)$.  Each node of this graph has thus an outdegree equal to
the power set of $V$ minus one (for the empty set), that is
$2^n-1$. Figures~\ref{it} and~\ref{itpos} picture the general iteration graphs
of two particular networks, namely a negative and a positive Boolean automata
circuit of size $3$ (Boolean automata circuits are defined formally in the next
section). \medskip
\begin{figure}[htbp!]
  \begin{center}
    \hspace{-3cm}\scalebox{0.6}{\input{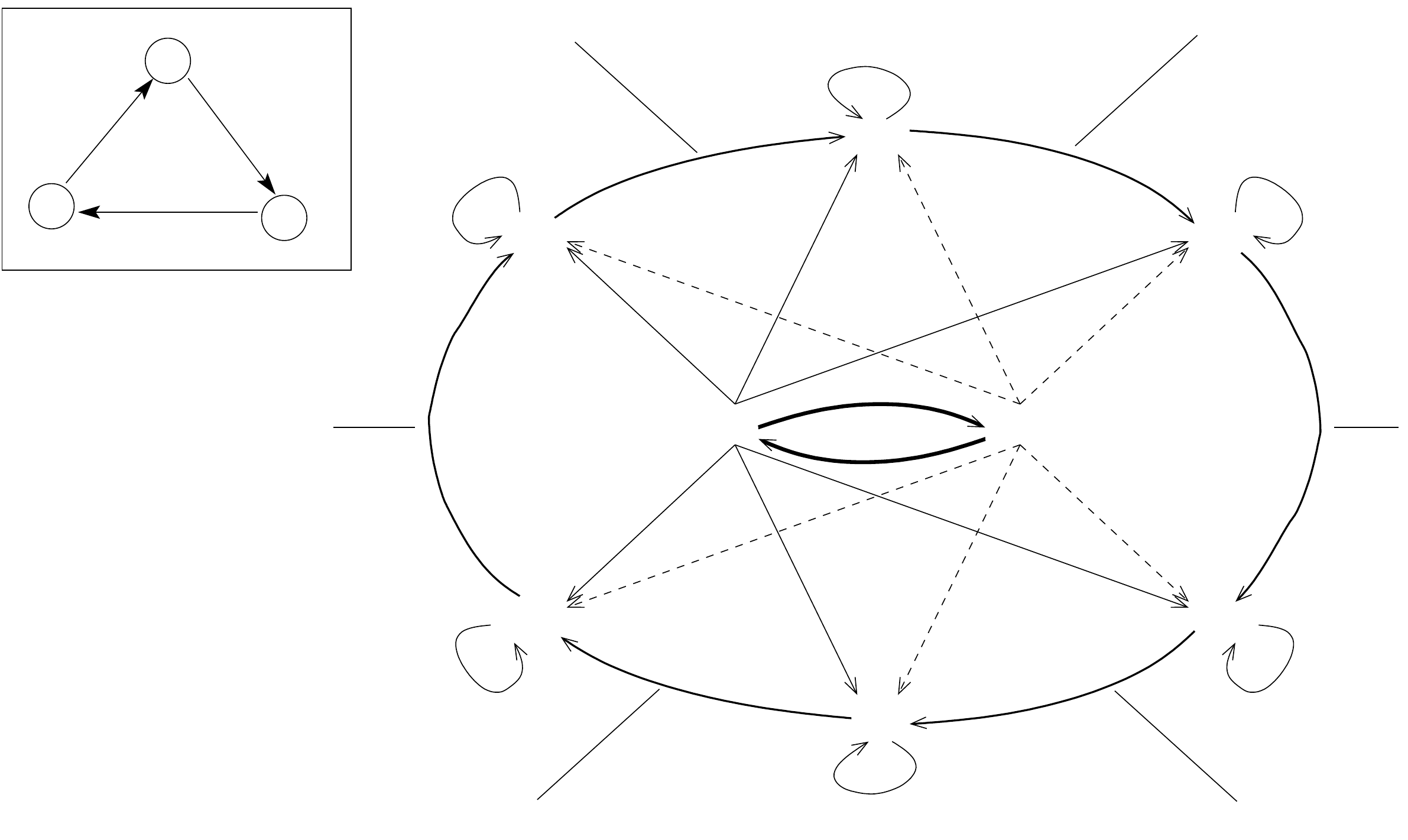_t}}
    \caption{{\small General iteration graph of a negative Boolean automata
        circuit of size $3$. Arcs labeled by the empty set have been
        omitted. Arcs labeled by several subsets represent several arcs labeled
        by one subset. The different fonts of arcs have been chosen for the sake
        of clarity and not of meaning.  The interaction graph of the network is
        pictured in the frame.}}\label{it}
  \end{center}
\end{figure}
\begin{figure}[htbp!]
  \begin{center}
    \hspace{-3cm}\scalebox{0.6}{\input{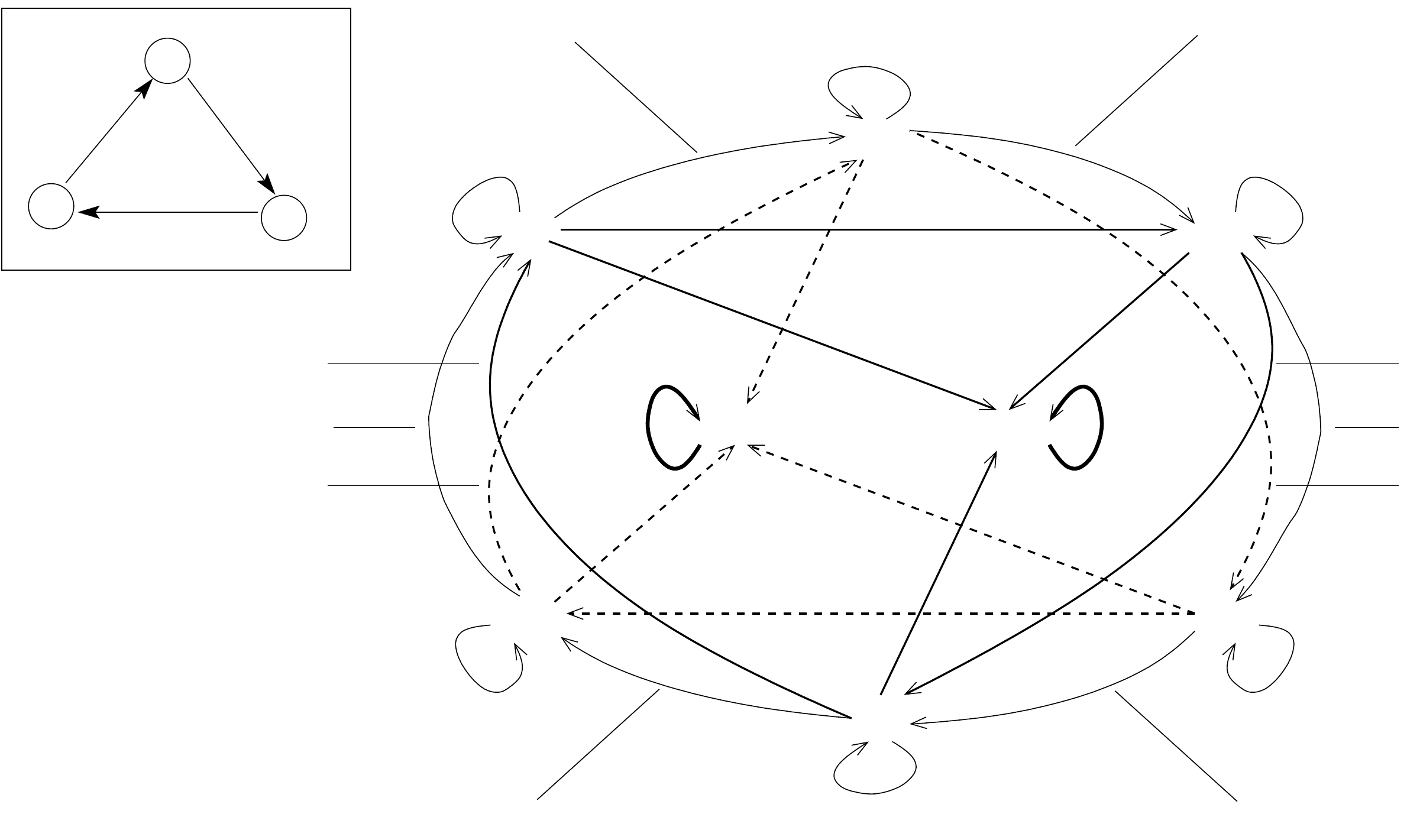_t}}
    \caption{{\small General iteration graph of a positive Boolean automata circuit of
        size $3$. Arcs labeled by the empty set have been omitted. Arcs labeled
        by several subsets represent several arcs labeled by one subset. The label
        ${\cal P}$ is equivalent to the label $\{0\}, \{1\}, \{2\},
        \{0,1\},\{0,2\},\{1,2\},\{0,1,2\}$, {\it i.e.,} the power set of
        $\{0,1,2\}$ minus the empty set. The different fonts of arcs have been
        chosen for the sake of clarity and not of meaning. The interaction graph
        of the network is pictured in the frame. }}
  \end{center}
  \label{itpos}
\end{figure}

\section{General iteration graph of a real network model}
\label{appl}
In~\cite{theseS}, Sen\'e studies a regulation network of size $12$ modeling the
floral morphogenesis of plant \emph{Arabidopsis thaliana}. We will call this
network the {\it Mendoza network} for short because it was first introduced by
Mendoza and Alvarez-Buylla~\cite{mend}. With a sequential update schedule
this network has six fixed points.  With the parallel update schedule, Sen\'e
in~\cite{theseS} finds that it has the same six fixed points%
\footnote{It is well known and easy to see that all block-sequential updates of
  a network induce the same fixed points.} but also seven limit cycles of period
$2$. Whereas the fixed points have some biological meaning in that they
correspond to effective or hypothetical cellular types, the limit cycles are
believed to have none {\it a priori}. In this section, we use general iteration
graphs to suggest an explanation of this difference observed between the
behaviour of the model and the known behaviour of the real network.  \medskip

In~\cite{theseS}, a simplified version of the Mendoza network is presented. Let
us denote it by $N$. The network $N$ has only two strongly connected components%
. The rest of the network contains no cycles so that after a certain number of
time steps, the states of all nodes not belonging to the two strongly connected
components become fixed. These components thus represent, in a sense, the
``motor of the network dynamics''. This is why, here, we focus on them and on
the few eventual nodes that act directly on them. We call $N_1=(G_1,{\cal F}_1)$
and $N_2=(G_2,{\cal F}_2)$ the two sub-networks corresponding to the two
strongly connected components of $N$ and their surrounding nodes. Their
interaction graphs $G_1$ and $G_2$ are represented in
figures~\ref{fig_mend}~$a.$ and~\ref{fig_mend}~$b.$, respectively. Let us notice
that as soon as node $3$ of $G_2$ is updated once, it takes the same state as
node $2$ and neither of these two nodes ever changes states again. Therefore, we
can make out two different cases. The first one is when node $2$ is initially in
state $0$ and the second one is when it is initially set to $1$. Assuming node
$3$ is updated at least once, to study the dynamics of $N_2$, we only consider
the eight configurations $x\in\{0,1\}^4$ in which $x_2=x_3$.
\begin{figure}[htbp!]
  \begin{center}
    \begin{tabular}{cc}
    \scalebox{0.7}{\input{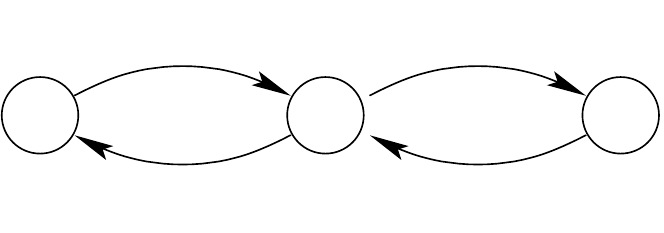_t}}\hspace{1cm}& \hspace{1cm}
    \scalebox{0.7}{\input{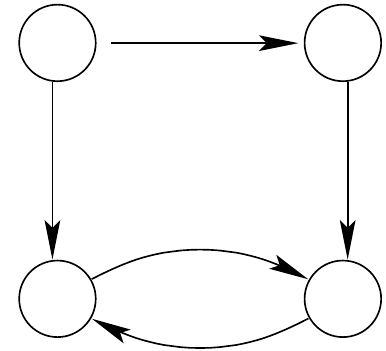_t}}\\[0.4cm]
    $a.$ $G_1$ \hspace{1cm}& \hspace{1cm} $b.$ $G_2$\\[1cm]
    \scalebox{0.7}{\input{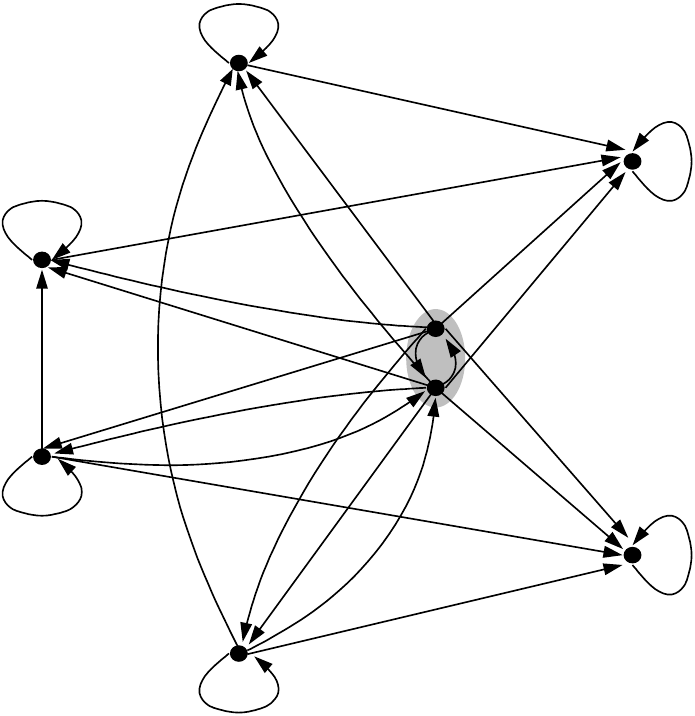_t}}\hspace{1cm}&\hspace{1cm}
    \scalebox{0.7}{\input{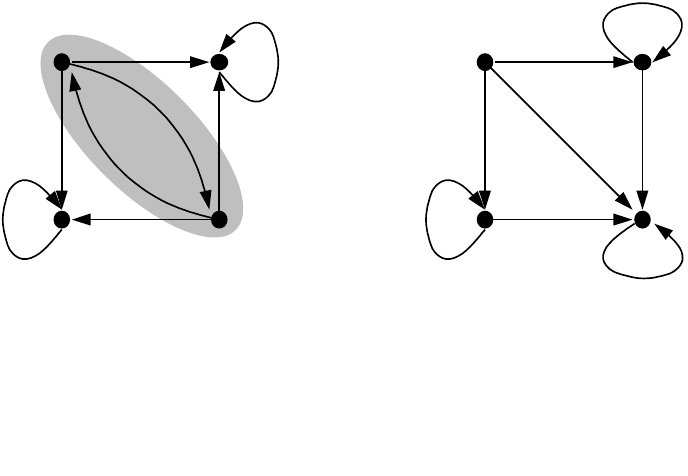_t}}\\[0.4cm]
    $c.$ ${\cal D}(N_1)$ \hspace{1cm}& \hspace{1cm} $d.$ ${\cal D}(N_2)$\\
    \end{tabular}
    \caption{{\small Sub-networks $N_1=(G_1,{\cal F}_1)$ and $N_2=(G_2,{\cal
          F}_2)$ of the Mendoza network and their dynamics. Figures $a.$ and
        $b.$ picture the interaction graphs $G_1$ and $G_2$. Nodes $0,1,2$ of
        $G_1$ represent, respectively, genes {\sc AP3}, {\sc BFU}, {\sc PI}
        (see~\cite{theseS}). Nodes $0,1,2,3$ of $G_2$ represent, respectively,
        genes {\sc AP1}, {\sc AG}, {\sc EMF1}, {\sc TFL1}. Figures $c.$ and $d.$
        represent the general iteration graphs ${\cal D}(N_1)$ and ${\cal
          D}(N_2)$ of the sub-networks $N_1$ and $N_2$. In these graphs, for the
        sake of clarity, we have represented $\mathrm{n>1}$ arcs with the same
        beginning and ending as one unique arc labelled by $\mathrm{n}$. We have
        omitted the subset labels. In figure $d.$ there are two separate
        graphs. The one on the left corresponds to the case where nodes $2$ and
        $3$ of $G_2$ are both in state $0$ and the one on the right to the case
        where they are both in state $1$.  In all general iteration graphs of
        figures $c.$ and $d.$, shadowed subgraphs correspond to limit cycles of
        $N_1$ or $N_2$ that are observed with the parallel update schedule.}}
  \end{center}
  \label{fig_mend}
\end{figure}

In the general iteration graphs of $N_1$ and $N_2$ represented in
figures~\ref{fig_mend}~$c.$ and~\ref{fig_mend}~$d.$, appear the limit cycles
that are induced by the parallel update schedule (shadowed sub-graphs in
figures~\ref{fig_mend}~$c.$ and~\ref{fig_mend}~$d.$).  These limit cycles of the
sub-networks are responsible for the limit cycles of the network $N$ updated in
parallel. Now, let us suppose that although the network modeled by $N$ is indeed
updated in parallel, errors in the updating order do occur from time to time.
In order to analyse formally the chances that one of these limit cycles has to
be effectively observed, let us introduce some very coarse notions of {\em
  robustness} and {\em likeliness} of a set of configurations. Our aim here is
only to give a rough intuition on the matter of these unexplained limit
cycles. We certainly do not claim to introduce some subtle and indisputable
mathematical tools to analyse general iteration graphs. Thus, although the
following two functions will serve us now as measures%
\footnote{Note that, strictly speaking ${\cal R}$ and $\mathbb{P}$ are not
  mathematical measures.} of, respectively, the {\em robustness} and the {\em
  likeliness} of a set of configurations $C\subseteq\{0,1\}^n$, we believe that
these notions surely call for much finer definitions.
$$
{\cal R}(C)=\frac{1}{deg^{+}(C)}\hspace{2cm}
\mathbb{P}(C)=\frac{deg^{-}(C)}{{\cal T}_{\overline{C}}}
$$
Above, ${\cal T}_{\overline{C}}$ is the number of arcs $(x,y)$ in ${\cal D}(N)$
that are not between configurations of $C$ ($x,y\notin C$) and
$deg^+(C)=|\{(x,y)\ |\ x\in C,y\notin C \}|$.
For the set of configurations that are shadowed in figures~\ref{fig_mend}~$c.$
and~\ref{fig_mend}~$d.$, the value of ${\cal R}(\cdot)$ is smaller than $1$ and
the value of $\mathbb{P}(\cdot)$ is null. On the contrary, for all fixed points
$x$ in these figures, ${\cal R}(\{x\})=1/0=\infty$ is obviously maximal and
$\mathbb{P}(x)>0$.  In other words, fixed points appear to be ``robust'' and
``likely'' in the sense that many arcs lead to them and no arcs leave them.  As
for limit cycles, on the contrary, not only very few arcs of the general
iteration graphs lead to them but also, all of their outgoing arcs lead to
configurations that do not belong to them.  Thus, starting in one arbitrary
configuration, the network has very few chances to end in a configuration
belonging to a limit cycle and if ever it does, the chances are that it will
leave it very quickly. Assuming that it is doubtful that real networks such as
the one commanding the floral morphogenesis of \emph{Arabidopsis thaliana} may
obey infallibly the exact same (block-sequential) updating order of their
elements, this, we believe, may be interpreted as evidence of the fact that the
limit cycles of the Mendoza network observed with the parallel update schedule
are highly improbable to be reached and maintained over time, contrary to its
fixed points.

\section{Boolean automata circuits}
\label{circ}
A {\bf circuit} of size $n$ is a directed graph that we will denote by
$\mathbb{C}_n=(V,A)$. We will consider that its set of nodes,
$V=\{0,\ldots,n-1\}$, corresponds to the the set of elements of
$\mathbb{Z}/n\mathbb{Z}$ so that, considering two nodes $i$ and $j$, $i+j$
designates the node $i+j\mod n$. The circuits set of arcs is then $A=\{(i,i+1)\
|\ 0\leq i<n\}$. Let $id$ be the identity function ($\forall a\in\{0,1\},\
id(a)=a$) and $neg$ the negation function ($\forall a\in\{0,1\},\ neg(a)=\neg
a=1-a$). A {\bf Boolean automata circuit} of size $n$ is a 
network whose interaction graph is a circuit and whose set of local transition
functions is included in $\{id,neg\}$.
For this particular instance of a Boolean automata network, the update rule
given by equation~\ref{update} simplifies to:
\begin{equation}
  \label{update_circ}
  x_j'=f_j(x_{i-1}).
\end{equation}
With the restriction on the local transition functions, $f_i\in\{id,neg\}$, we
do not loose any generality. Indeed, if at least one of the nodes of the
circuit, say node $i$, has a constant local function then its incoming arc is
useless. Node $i$ does not depend on node $i-1$ and we no longer are looking at
a ``real'' circuit. Note that arcs $(i,i+1)$ such that $f_{i+1}=id$ ({\it resp.}
$f_{i+1}=neg$) are positive ({\it resp.} {\bf negative}).  A Boolean automata
circuit $C$ and the circuit associated, $\mathbb{C}_n=(V,A)$, are said to be
{\bf positive} ({\it resp.} {\bf negative}) if the number of negative arcs of
$A$ is even ({\it resp.} odd).  \medskip

Let $C=(\mathbb{C}_n,\{f_i\ |\ i\in V\})$ be a Boolean automata circuit of size $n$. In the
sequel, we will make substantial use of the following function:
$$
F[j,i]=
\left\lbrace  
  \begin{tabular}{ll}
    $f_j\circ f_{j-1}\circ\ldots\circ f_i$ & \hskip10pt if $i \leq j$ \\
    $f_j\circ f_{j-1}\circ\ldots\circ f_0\circ f_{n-1}\circ\ldots\circ f_{i}$  
    & \hskip10pt if $j < i$ \\
  \end{tabular}
\right.
$$
Because $\forall k,\ f_k\in\{id,neg\}$, $F[j,i]$ is injective. Also, note that
if $C$ is positive then $\forall j,\ F[j+1,j]=id$ and if, on the contrary, $C$
is negative then $\forall j,\ F[j+1,j]=neg$.

\subsection*{General iteration graphs of Boolean automata circuits}
In this section, we enumerate some preliminary results and eventually drive from
them a description of the general iteration graph of any Boolean automata
circuit.
\medskip

We will use the following notations: $\forall a\in\{0,1\},\ \neg a=neg(a)=1-a$
and $\forall x=(x_0,\ldots,x_{n-1})\in\{0,1\}^n,\ \overline{x}=(\neg
x_0,\ldots,\neg x_{n-1})$ (not to be confused with the complementary
$\overline{A}=\{a\notin A\}$ of a set $A$).  \medskip

The first result of this section will allow us to focus later on just one
instance of all Boolean automata circuit of given sign and size.
\begin{lemma}
  \label{iso}
  If $C$ and $C'$ are two Boolean automata circuits of same sign and size, then
  their general iteration graphs ${\cal D}(C)$ and ${\cal D}(C')$ are
  isomorphic.
\end{lemma}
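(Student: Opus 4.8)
Two Boolean automata circuits $C = (\mathbb{C}_n, \{f_i\})$ and $C' = (\mathbb{C}_n, \{f'_i\})$ of the same size $n$ and the same sign should have isomorphic general iteration graphs. The key insight is that since both are on the same circuit $\mathbb{C}_n$, their only difference is which arcs are negative (labeled $neg$) versus positive (labeled $id$). Since both have the same sign, they have the same parity of negative arcs. The "sign" being the only invariant suggests that by flipping the "orientation" of states at certain nodes (i.e., replacing $x_k$ with $\neg x_k$ at some subset of nodes), we can convert one circuit's dynamics into the other's.

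**The conjugacy idea.** The natural approach is to find a bijection $\phi: \{0,1\}^n \to \{0,1\}^n$ on configurations that conjugates the dynamics. Specifically, I want $\phi$ of the form: flip coordinate $k$ (replace $x_k$ by $\neg x_k$) for $k$ in some subset $S \subseteq V$, and fix the others. This $\phi$ is a bijection. I need $\phi$ to send arcs of $\mathcal{D}(C)$ to arcs of $\mathcal{D}(C')$.

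Let me think about what $\phi$ must satisfy. An arc in $\mathcal{D}(C)$ is $(x, F^P_C(x))$ for nonempty $P$. I want $(\phi(x), \phi(F^P_C(x)))$ to be an arc in $\mathcal{D}(C')$, ideally $\phi(F^P_C(x)) = F^P_{C'}(\phi(x))$ — conjugacy with the *same* update set $P$. This would give a graph isomorphism preserving even the arc labels.

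Writing out coordinate $j \in P$: on the left, $\phi(F^P_C(x))_j = \epsilon_j \oplus f_j(x_{j-1})$ where $\epsilon_j = 1$ if $j \in S$ (flip) and $0$ otherwise, and $\oplus$ is XOR. On the right, $F^P_{C'}(\phi(x))_j = f'_j(\phi(x)_{j-1}) = f'_j(\epsilon_{j-1} \oplus x_{j-1})$. Since each $f_j, f'_j \in \{id, neg\}$, I can write $f_j(a) = \delta_j \oplus a$ and $f'_j(a) = \delta'_j \oplus a$ where $\delta_j, \delta'_j \in \{0,1\}$ encode the sign of arc $(j-1, j)$.

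**Solving the coboundary equation.** The condition becomes $\epsilon_j \oplus \delta_j \oplus x_{j-1} = \delta'_j \oplus \epsilon_{j-1} \oplus x_{j-1}$ for all $x$, i.e.
\[
\epsilon_j \oplus \epsilon_{j-1} = \delta_j \oplus \delta'_j \qquad \text{for all } j \in V,
\]
indices mod $n$. This is a linear system over $\mathbb{Z}/2\mathbb{Z}$ for the unknowns $\epsilon_0, \dots, \epsilon_{n-1}$. Summing all $n$ equations around the circuit, the left side telescopes to $0$, so a solution exists if and only if $\bigoplus_j (\delta_j \oplus \delta'_j) = 0$, i.e. $\bigoplus_j \delta_j = \bigoplus_j \delta'_j$. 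But $\bigoplus_j \delta_j$ is exactly the parity of the number of negative arcs of $C$, which is the sign of $C$; likewise for $C'$. Since $C$ and $C'$ have the same sign, the compatibility condition holds, so the $\epsilon_j$ can be chosen (solve by fixing $\epsilon_0$ freely and propagating $\epsilon_j = \epsilon_{j-1} \oplus \delta_j \oplus \delta'_j$ around the cycle; the sign hypothesis guarantees consistency upon return to $\epsilon_0$).

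**The plan, and the main obstacle.** So the plan is: (1) encode $f_j, f'_j$ by bits $\delta_j, \delta'_j$; (2) posit $\phi$ as the coordinatewise-flip map determined by bits $\epsilon_j$; (3) reduce the desired identity $\phi \circ F^P_C = F^P_{C'} \circ \phi$ to the cocycle equation $\epsilon_j \oplus \epsilon_{j-1} = \delta_j \oplus \delta'_j$; (4) observe that this solvability is exactly the equal-sign hypothesis via the telescoping sum around the cycle; (5) since the conjugacy holds for every update set $P$ simultaneously and $\phi$ is a bijection, it maps the arc set of $\mathcal{D}(C)$ bijectively onto that of $\mathcal{D}(C')$, yielding the graph isomorphism. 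The one delicate point to get right is the index bookkeeping on the cyclic group $\mathbb{Z}/n\mathbb{Z}$ — making sure the coordinate that $f_j$ reads is $x_{j-1}$, so the flip bit that interacts with $f'_j$ is $\epsilon_{j-1}$, not $\epsilon_j$. Everything else is routine linear algebra over $\mathbb{F}_2$; the real content is recognizing that the obstruction to solving the coboundary equation around the loop is precisely the sign invariant, which is why same sign is exactly the right hypothesis.
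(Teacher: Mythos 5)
Your proof is correct and takes essentially the same approach as the paper: both arguments construct a coordinate-wise complementation of configurations (your $\phi$, the paper's $\sigma$, which flips bit $i$ exactly when the partial composition $G[0,i+1]$ equals $neg$) that conjugates the update maps $F^P$ of $C$ to those of $C'$ simultaneously for every subset $P$, which immediately yields the isomorphism of the general iteration graphs. The difference is only presentational: you \emph{derive} the flip pattern as the solution of the cocycle equation $\epsilon_j\oplus\epsilon_{j-1}=\delta_j\oplus\delta'_j$ over $\mathbb{F}_2$, so that the equal-sign hypothesis appears explicitly as the solvability condition around the cycle and both circuits and both signs are treated uniformly, whereas the paper writes the map down in closed form and verifies the conjugacy by case analysis (implicitly taking $C$ canonical and relegating the negative case to similarity), with the sign hypothesis entering only implicitly through the identity $G[0,i+1]=G[0,i]\circ g_i$ at the wrap-around index.
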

\begin{proof}
  Suppose $C=(\mathbb{C}_n,{\cal F}=\{f_i\})$ and $C'=(\mathbb{C}_n,{\cal
    G}=\{g_i\})$ are two Boolean automata circuits of same sign and size $n$. We
  define the function $\sigma:\{0,1\}^n\to \{0,1\}^n$ satisfying the following:
  $$
  \sigma_i(x)\ =
  \begin{cases}
    x_i & \text{if } G[0,i+1]=id\\
    \neg x_i & \text{if } G[0,i+1]=neg.
  \end{cases}
  $$
  Let $x\in\{0,1\}^n$, $P\subseteq V=\{0,\ldots,n-1\}$, $y=G^P(\sigma(x))$ and
  $z=\sigma(F^P(x))$.  Then by definition of $y$ and $z$, $\forall i\in V$, the
  following holds:
  $$
  y_i\ =
  \begin{cases} 
    \sigma_i(x) & =
    \begin{cases} 
      x_i & \text{if }   (A)\ \Leftrightarrow\ i\notin P\text{ and }G[0,i+1]=id\\[3pt]
      \neg x_i & \text{if } (B)\ \Leftrightarrow\ i\notin P\text{ and }G[0,i+1]=neg 
    \end{cases}\\[5pt]
    g_i(\sigma_{i-1}(x))  & = 
    \begin{cases} 
      x_{i-1} & \text{if }(C)\ \Leftrightarrow\  i\in P \text{ and } 
      \begin{cases} 
        g_i=id \text{ and } G[0,i]=id\\   
        g_i=neg \text{ and }  G[0,i]=neg 
      \end{cases}\\
      
      \neg x_{i-1} & \text{if } (D)\ \Leftrightarrow\ i\in P\text{ and }
      \begin{cases}
        g_i=neg \text{ and } G[0,i]=id\\
        g_i=id \text{ and } G[0,i]=neg 
      \end{cases}
    \end{cases}
  \end{cases}
  $$
  and
  $$
  z_i\ =
  \begin{cases} 
    F^P(x) & =
    \begin{cases} 
      x_i & \text{if } (A)\\   
      x_{i-1} & \text{if }(E)\ \Leftrightarrow\ G[0,i+1]=id\text{ and }i\in P 
    \end{cases}\\
    \neg F^P(x) & =
    \begin{cases} 
      \neg x_i & \text{if }(B)\\ 
      \neg x_{i-1} & \text{if }(H)\ \Leftrightarrow\ G[0,i+1]=neg\text{ and }i\in P  
    \end{cases}
  \end{cases}
  $$
  Because $G[0,i+1]=G[0,i+1]\circ g_i\circ g_i= G[0,i]\circ g_i$, it holds
  that $(C)\ \Leftrightarrow\ [i\in P \text{ and } G[0,i+1]=id]\
  \Leftrightarrow\ (E)$ and for similar reasons $(D)\ \Leftrightarrow\ (H)$. As
  a result, $y=z$ and so does Lemma~\ref{iso}.
 \hfill$\Box$
\end{proof}
Following Lemma~\ref{iso} we define the representative of all positive Boolean
automata circuits of size $n$ as the circuit that has no negative arcs ($\forall
i<n,\ f_i=id$). We call this circuit the {\bf canonical positive circuit}. The
canonical negative circuit of size $n$ may be defined as the circuit
$C=(\mathbb{C}_n,{\cal F}=\{f_i=id\ |\ 0<i<n\}\cup\{f_0=neg\})$. However, in
most of the proofs that follow, for the sake of simplicity and because the
negative case is very similar to the positive case, we concentrate on positive
Boolean automata circuits.  The two following result appear in~\cite{Remy}. They
describe some useful properties of the function $U(\cdot)$. The first one of
them is infered from the fact that if $y=F(x)$, then $ i\in U(x)\Leftrightarrow
x_i\neq f_i(x_{i-1})=y_i \Leftrightarrow f_{i+1}(x_i)= y_{i+1}\neq f_{i+1}(y_i)
\Leftrightarrow i+1\in U(y)$.
\begin{lemma}
  \label{u_par}
  Let $C=(\mathbb{C}_n,{\cal F})$ be a Boolean automata circuit of size $n$.
  Then, $\forall x\in \{0,1\}^n,\ U(F(x))=\{i\in V\ |\ i-1\in U(x)\}$ so that
  $\forall k\in\mathbb{N},\ u(x)=u(F(x))=u(F^k(x))$.
\end{lemma}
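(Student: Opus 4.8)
The plan is to prove the set identity $U(F(x))=\{i\in V\mid i-1\in U(x)\}$ by a direct chain of equivalences at a fixed index, exactly along the lines of the remark preceding the statement, and then to deduce the invariance of $u(\cdot)$ from the fact that the index shift is a bijection of $V=\mathbb{Z}/n\mathbb{Z}$.

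First I would set $y=F(x)$, so that, by definition of the parallel global transition function on a circuit, $y_i=f_i(x_{i-1})$ for every $i\in V$. Fixing an index $i$, I would unwind the membership condition: $i\in U(x)$ means $x_i\neq f_i(x_{i-1})$, and since $f_i(x_{i-1})=y_i$ this is the same as $x_i\neq y_i$. The key observation is that each local function $f_{i+1}\in\{id,neg\}$ is a bijection of $\{0,1\}$, hence injective (a fact already recorded for the compositions $F[j,i]$), so applying $f_{i+1}$ preserves and reflects inequality: $x_i\neq y_i$ if and only if $f_{i+1}(x_i)\neq f_{i+1}(y_i)$. Finally, because $f_{i+1}(x_i)=F(x)_{i+1}=y_{i+1}$, the right-hand side reads $y_{i+1}\neq f_{i+1}(y_i)$, which is precisely the condition defining $i+1\in U(y)$. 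Chaining these equivalences gives $i\in U(x)\Leftrightarrow i+1\in U(F(x))$; reindexing by $j=i+1$ yields $j\in U(F(x))\Leftrightarrow j-1\in U(x)$, i.e. the claimed identity $U(F(x))=\{j\in V\mid j-1\in U(x)\}$.

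For the second assertion I would note that the map $i\mapsto i+1$ is a bijection of $V$, so it carries $U(x)$ bijectively onto $U(F(x))$; hence $u(F(x))=|U(F(x))|=|U(x)|=u(x)$. Applying this equality repeatedly, or by a trivial induction on $k$ using $F^{k}=F\circ F^{k-1}$, gives $u(F^k(x))=u(x)$ for every $k\in\mathbb{N}$.

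There is no genuine obstacle here: the only point that requires care is the use of injectivity of $f_{i+1}$, which holds precisely because the local functions are restricted to $\{id,neg\}$, and keeping the index shift consistent throughout the chain. One should also read all index arithmetic modulo $n$, since $i+1$ is understood in $\mathbb{Z}/n\mathbb{Z}$, so that the shift is indeed a bijection of $V$ and the cardinality argument is valid.
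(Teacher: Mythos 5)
Your proof is correct and follows essentially the same route as the paper: the paper justifies this lemma by precisely the chain of equivalences $i\in U(x)\Leftrightarrow x_i\neq f_i(x_{i-1})=y_i \Leftrightarrow f_{i+1}(x_i)=y_{i+1}\neq f_{i+1}(y_i)\Leftrightarrow i+1\in U(y)$, stated in the text just before the lemma, with the injectivity of $f_{i+1}\in\{id,neg\}$ used implicitly where you make it explicit. Your reindexing and the bijectivity-of-the-shift argument for $u(x)=u(F^k(x))$ are just the written-out form of what the paper leaves to the reader.
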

\begin{lemma}
  \label{evenodd}
  Let $C$ be a Boolean automata circuit of size $n$. If $C$ is positive then
  $\forall x\in \{0,1\}^n$, $u(x)$ is even and if $C$ is negative then $\forall
  x\in \{0,1\}^n$, $u(x)$ is odd.
\end{lemma}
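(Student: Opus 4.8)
The plan is to establish the parity of $u(x)$ by a direct counting modulo $2$, handling both signs simultaneously. First I would record how membership in $U(x)$ depends on the local function at each node. Since $f_i\in\{id,neg\}$, the condition $i\in U(x)\Leftrightarrow x_i\neq f_i(x_{i-1})$ unwinds to $x_i\neq x_{i-1}$ when $f_i=id$ and to $x_i=x_{i-1}$ when $f_i=neg$. Writing $\epsilon_i=0$ if $f_i=id$ and $\epsilon_i=1$ if $f_i=neg$, and reading the $\{0,1\}$-valued indicator $[\,\cdot\,]$ in $\mathbb{Z}/2\mathbb{Z}$, both cases are captured uniformly by
$$[\,i\in U(x)\,]\ \equiv\ (x_i+x_{i-1})+\epsilon_i \pmod 2,$$
because $[\,x_i\neq x_{i-1}\,]=x_i+x_{i-1}$ and $[\,x_i=x_{i-1}\,]=1+(x_i+x_{i-1})$ modulo $2$.

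Next I would sum this identity over all $i\in V=\{0,\ldots,n-1\}$ to get
$$u(x)\ =\ \sum_{i\in V}[\,i\in U(x)\,]\ \equiv\ \sum_{i\in V}(x_i+x_{i-1})\ +\ \sum_{i\in V}\epsilon_i \pmod 2.$$
The first sum telescopes around the circuit: since the indices are taken modulo $n$ (so that $x_{-1}=x_{n-1}$), each coordinate $x_k$ occurs exactly twice, once as $x_i$ and once as $x_{i-1}$, whence $\sum_{i\in V}(x_i+x_{i-1})=2\sum_k x_k\equiv 0\pmod 2$. The second sum counts exactly the indices $i$ with $f_i=neg$, that is, the arcs $(i-1,i)$ that are negative. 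Therefore $u(x)\equiv(\text{number of negative arcs of }C)\pmod 2$ for every configuration $x$.

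Finally I would invoke the definition of sign: $C$ is positive precisely when its number of negative arcs is even and negative precisely when it is odd. Substituting into the congruence above gives $u(x)$ even for all $x$ in the positive case and $u(x)$ odd for all $x$ in the negative case, which is the claim. I do not expect a real obstacle; the only point needing care is the telescoping step, where one must use the cyclic convention $x_{-1}=x_{n-1}$ so that every coordinate is counted exactly twice and the difference sum vanishes modulo $2$. As an alternative in line with the paper's stated strategy, one could instead use Lemma~\ref{iso} to reduce to the canonical positive circuit (all $f_i=id$), where $u(x)$ counts the cyclic sign changes of $x$ and is even by the standard parity argument for binary cycles, together with the canonical negative circuit, where the single negated arc shifts this count by one; the modular computation above simply carries out both reductions at once.
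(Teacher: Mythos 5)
Your proof is correct, but it takes a genuinely different route from the paper. You argue statically: reading the definition of $U(x)$ modulo $2$, you observe that the indicator of $i\in U(x)$ equals $x_i+x_{i-1}+\epsilon_i$ (with $\epsilon_i$ marking negative arcs), sum around the circuit, and let the telescoping kill the state-dependent part, obtaining the sharper congruence $u(x)\equiv(\text{number of negative arcs})\pmod 2$ for every $x$, from which the lemma is immediate. The paper argues dynamically instead: it invokes Lemma~\ref{u_par} (under the parallel update, the unstable set $U(\cdot)$ rotates, so over $n$ parallel steps each fixed node $i$ is unstable exactly $u(x)$ times, i.e.\ changes state exactly $u(x)$ times) together with the identity $F^n(x)_i=F[i,i+1](x_i)$, which is $x_i$ for a positive circuit and $\neg x_i$ for a negative one; the parity of $u(x)$ is then forced by whether node $i$ returns to or flips its initial state after $n$ steps. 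Your version is more elementary and self-contained: it needs neither Lemma~\ref{u_par} nor the composite functions $F[j,i]$, it treats both signs uniformly without any reduction to canonical circuits via Lemma~\ref{iso}, and it makes explicit that $u(x)\bmod 2$ is exactly the sign of the circuit. The paper's version, in exchange, reuses machinery ($F[j,i]$, the rotation lemma) that is already in place and needed elsewhere, and it explains the parity as a consequence of the circuit's dynamics rather than of its wiring alone. The only point of care in your argument, which you correctly flag, is the cyclic convention $x_{-1}=x_{n-1}$ making each coordinate appear exactly twice in the telescoping sum.
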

\begin{proof}
  Let $u=u(F(x))=u(F^k(x)),\ \forall k\in\mathbb{N}$.  Then, for any $i\in V$,
  $$
  u=|\{k\leq n\ |\ i\in U(F^k(x))\}|.
  $$
  \vspace{-4pt}
  And since
  \vspace{-4pt}
  $$
  F^n(x)_i=F[i,i+1](x_i)=
  \begin{cases}
    x_i & \text{ if } F[i,i+1]=id \text{ {\it i.e.,} if }C\text{ is positive,}\\
    \neg x_i & \text{ if } F[i,i+1]=neg \text{ {\it i.e.,} if }C\text{ is negative,}\\
  \end{cases}
 $$
  when $F$ is applied iteratively $n$ times, every node $i\in V$ must change
  states an even (resp. odd) number $u$ of times if $C$ is positive
  (resp. negative).\hfill$\Box$
\end{proof}
Focusing on the canonical positive Boolean automata circuit of size $n$ and on
its configurations of the form $X=(10)^k0^{n-2k}$ which clearly satisfy
$u(X)=2\cdot k$, we derive (extending the result to the negative case and using
Lemma~\ref{iso}) that:
\begin{multline*}
  u_{max}=max\{u(x)\ |\ x\in\{0,1\}^n\}\\= 
  \begin{cases}n & \text{ if }C\text{ is positive ({\it resp.} negative) and
    }n\text{ even ({\it resp.} odd), }\\
    n-1 & \text{ if }C\text{ is positive ({\it resp.} negative) and }n\text{ odd
      ({\it resp.} even).}
  \end{cases}
\end{multline*}
As for $u_{min}=min\{u(x)\ |\ x\in\{0,1\}^n\}$, it clearly equals $0$ when $C$ is
positive and $1$ when it is negative.
\medskip.

$N$ being a Boolean automata network of size $n$, we define the following binary
relations between configurations $x,y\in\{0,1\}^n$ of $N$:
$$
x\ \rightarrow y\ \Leftrightarrow\ \exists P\subseteq V,\ y=F^P(x).
$$
$$
x\stackrel{\ast}{\leftrightarrow}y\ \Leftrightarrow\
x\stackrel{\ast}{\rightarrow}y\text{ and }y\stackrel{\ast}{\rightarrow}x
$$
where $\stackrel{\ast}{\rightarrow}$ is the reflexive and transitive closure of
$\rightarrow$. Thus, $x\ \rightarrow y$ is equivalent to the existence of the
arc $(x,y)$ in ${\cal D}(N)$ and $x\stackrel{\ast}{\rightarrow}y$ is equivalent
to there being an oriented path in ${\cal D}(N)$ from $x$ to $y$. The two
following lemmas characterise the relations $\stackrel{\ast}{\rightarrow}$ and
$\stackrel{\ast}{\leftrightarrow}$.
\begin{lemma}
  \label{horiz}
  Let $C$ be a Boolean automata circuit of size $n$. Then, $\forall x,y\in
  \{0,1\}^n,$
  $$
  u(x)=u(y)\ \Rightarrow\
  x\stackrel{\ast}{\leftrightarrow} y.
  $$
\end{lemma}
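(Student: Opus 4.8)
The plan is to read $U(x)$ as a set of ``unstable fronts'' sitting on the nodes of $\mathbb{C}_n$ and to track how an update $F^P$ moves them. By Lemma~\ref{iso} I may assume $C$ is canonical, and by the usual symmetry I treat the positive case (so $u$ is even, Lemma~\ref{evenodd}); the negative case is identical. I first separate off the level $u=0$: for a positive circuit it consists only of the two constant configurations, which are fixed points, so the content of the lemma lies in the case $u(x)=u(y)=m\ge 1$ (for negative circuits $m\ge 1$ holds automatically), which is what I treat below.

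\emph{Elementary moves.} Writing $F^P$ out coordinatewise one checks that updating a maximal block of consecutive nodes of $P$ shifts the fronts inside that block one step forward; the front leaving the block at its forward end either lands on a stable node (it advances) or meets another front (the two annihilate). Two consequences: (i) $u(F^P(x))\le u(x)$ always, so $u$ can never increase; and (ii) the full update $F$ advances every front at once, hence by Lemma~\ref{u_par} it acts as a pure rotation of $U(x)$ that preserves $u$. Since the computation in the proof of Lemma~\ref{evenodd} gives $F^n=\mathrm{id}$ in the positive case, $F$ is invertible, so all rotations of a configuration already lie in one strong component. The remaining $u$-preserving primitive is a single-front hop $F^{\{i\}}$ performed when $i\in U(x)$ and $i+1\notin U(x)$: it carries the front from $i$ to $i+1$ and leaves $u$ unchanged.

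\emph{Two-way reachability.} Because $u$ never increases, if $u(x)=u(y)=m$ then any realisation of $x\stackrel{\ast}{\rightarrow}y$ keeps $u$ constant throughout; so it suffices to show that the subgraph on level $m$, whose arcs are the rotations and the non-colliding single-front hops, is strongly connected. When $m=n$ (possible only for positive $C$ with $n$ even) this level is the pair of alternating configurations, exchanged by one application of $F$, hence a single component. When $1\le m<n$ there is at least one stable node, i.e. a ``hole''; sliding fronts into holes lets me reach an arbitrary target set $U(y)$ of the same size $m$, while suitably many applications of $F$, together with routing one front once around the cycle, fix the only remaining freedom (given a front set there are exactly two configurations, $x$ and $\overline{x}$). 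Carrying this from $x$ to a fixed canonical configuration $c_m$ of level $m$, and likewise from $y$, yields $x\stackrel{\ast}{\rightarrow}c_m\stackrel{\ast}{\rightarrow}y$ and, symmetrically, the reverse, i.e. $x\stackrel{\ast}{\leftrightarrow}y$.

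\emph{Main obstacle.} The crux is precisely this two-way reachability: the primitive $F^P$ can only push fronts \emph{forward} and can only \emph{destroy} them, never move them backward or create them, so at first glance reachability looks one-directional. The escape is that the rotation $F$ has finite order and that, for $1\le m<n$, a hole is always present, so a front can be routed all the way around $\mathbb{C}_n$ and brought back; this simulates a backward hop and restores both the front set and the orientation bit. Making the hole management precise---verifying that while one front tours the cycle the others can be parked and later returned without forcing an annihilation, and that the parity at the reference node ends up correct---is the only genuinely delicate point; everything else is bookkeeping on the front picture supplied by Lemma~\ref{u_par}.
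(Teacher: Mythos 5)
Your setup and overall architecture are sound, and they genuinely differ from the paper's proof: the paper reduces every configuration to a canonical representative $X=(10)^k0^{n-2k}$ by a lexicographic induction on a tuple measure, whereas you aim at strong connectivity of each level $U_m$ by (a) transporting the front set $U(x)$ onto an arbitrary target set of the same size and (b) realising the complementation $x\mapsto\overline{x}$ inside a level. Your front calculus is correct: under any $F^P$ an effectively updated front advances one step, two fronts that meet annihilate in pairs, $u$ never increases, and $F$ is an invertible rotation; moreover a front set of size $m$ determines the configuration up to complementation, so the reduction to (a) and (b) is legitimate.

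The genuine gap is in your resolution of what you rightly call the main obstacle: for $m\ge 2$, \emph{no single front can ever tour the cycle}, with or without ``parking''. The case analysis behind your elementary moves shows that under any $F^P$ each surviving front either stays put or advances by exactly one step, fronts are never created, and two fronts that become adjacent and push into each other annihilate; consequently the cyclic order of the surviving fronts is invariant under every update. A touring front would need net displacement $n$ while the others have bounded displacement, so it would have to overtake the front ahead of it, which is exactly a forbidden crossing. Parking cannot help: fronts move only forward, so the parked front stays ahead of the tourer and blocks it forever; and if you force the collision you drop to level $m-2$ irreversibly, since $u$ is non-increasing and fronts cannot be re-created. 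The repair is to distribute the tour over all $m$ fronts: using the hole guaranteed by $m<n$, advance each front, one at a time in a suitable cyclic order, to the position initially occupied by the next front. The total displacement is then exactly $n$ and each node is flipped exactly once, so you reach $\overline{x}$ with $U$ unchanged---this collective rotation is essentially what the paper's ``all $a_i=1$'' manoeuvre implements. With that replacement (and an actual argument for step (a), which you only assert but which follows from the same collective motions plus fine positioning of individual fronts inside gaps), your proof goes through; as written, its central step fails.
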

\begin{proof}
  We prove Lemma~\ref{horiz} in the case where $C$ is the canonical Boolean
  automata circuit of size $n$.  First note that because $\forall i,\ f_i=id$,
  it holds that $\forall i\in U(x),\ x_i\neq x_{i-1}$.  Note also that
  $F(x_0,\ldots,x_{n-2},x_{n-1})=(x_{n-1},x_0\ldots,x_{n-2})$. As a result, for
  any configuration $x$ such that $u(x)=k$, there exists integers
  $a_1,\ldots,a_k,b_1,\ldots,b_k$ such that:
  \begin{equation}
    \label{form}
    x\stackrel{\ast}{\leftrightarrow} x'=1^{a_1}0^{b_1}1^{a_2}0^{b_2}\ldots
    1^{a_k}0^{b_k}.
  \end{equation}
  Indeed, if $x$ is not already in the form of the configuration $x'$ above,
  then $\exists a\in \mathbb{N},\ x=x_0\ldots x_{n-2-a}01^a$ and/or $\exists
  b\in \mathbb{N},\ x=0^b1x_{b}\ldots x_{n-1}$. In the first case,
  $x'=F^a(x)=1^ax_0\ldots x_{n-2-a}0$ and in the second $x'=F^{n-b}(x)=1
  x_{b}\ldots x_{n-1}0^b$.  Now, define $X=(10)^{k}0^{n-2k}$. We claim that for
  any $x$ in the form of $x'$ in~(\ref{form}), $x
  \stackrel{\ast}{\leftrightarrow} X$ so that, as a consequence, $\forall
  x,y\in\{0,1\}^n,\ u(x)=u(y)=k\ \Rightarrow\ x\stackrel{\ast}{\leftrightarrow}
  X\stackrel{\ast}{\leftrightarrow}y.$ Define the $k$-uples
  $n(x)=(n-b_k,a_1,\ldots,a_k)$ and order them lexicographically.  If
  $x=1^{a_1}0^{b_1}1^{a_2}0^{b_2}\ldots 1^{a_k}0^{b_k}$ is such that
  $n(x)>n(X)=(2k-1,1,\ldots,1)$, then there exists
  $x'=1^{a'_1}0^{b'_1}1^{a'_2}0^{b'_2}\ldots 1^{a'_k}0^{b'_k}$ such that
  $n(x')<n(x)$. Indeed, first suppose that $\exists i\leq k,\ a_i>1$, {\it
    i.e.,} $ x=1^{a_1}0^{b_1}\ldots 0^{b_{i-1}}11^{a_i-1}0^{b_i}
  \ldots1^{a_k}0^{b_k}$. Let $j=\sum_{l=1}^{i-1}a_l+b_l$ be the first node
  of the $i^{th}$ section of $1$s in $x$ and let
  $$
  x'=F^{\{j\}}(x)= 1^{a_1}0^{b_1}\ldots
  0^{b_{i-1}}0 1^{a_i-1}0^{b_i} \ldots 1^{a_k}0^{b_k}.
  $$ 
  Then, $n(x')=(n-b_k,a_1,\ldots,a_i-1,a_i-1, a_{i+1},\ldots,a_k)<n(x)$.  Now,
  suppose that $\forall i\leq k,\ a_i=1$: $x=10^{b_1}1 0^{b_{2}}1 \ldots
  0^{b_{k-1}}10^{b_k}$.  Let $j=\sum_{l=1}^{k-1}1+b_l=n-b_k$ be the first and
  only node of the $k^{th}$ section of $1$s in $x$ and let
  $$
  \begin{tabular}{l}
    $x'''= F^{\{n-2 =j+b_k-1\}} \circ\ldots\circ F^{\{j+1 \}}(x)= x_0\ldots
    x_{j-b_{k-1}-1}0^{b_{k-1}} 1 1^{b_k-1}0$,\\[3pt]
    $x''= F^{\{n-3 \}} \circ\ldots \circ F^{\{j+1\}}\circ
    F^{\{j\}}(x''')=x_0\ldots x_{j-b_{k-1}-1}0^{b_{k-1}} 0^{b_k-1}10$ \text{ and
    }\\[3pt]
    $x'=F^2(x'')=10x_0\ldots x_{j-b_{k-1}-1}0^{b_{k-1}+b_k-1}.$
  \end{tabular}
  $$
  We have supposed above that $b_{k-1}>1$. If it isn't, we can always consider
  the configuration $F^{b_k+1}(x)$ instead of $x$.  Then,
  $n(x')=(n-b_k-b_{k-1}+1,1,\ldots,1)<n(x)=(n-b_k,1,\ldots,1)$. By induction on
  $n(x)$, we derive from this that $x\stackrel{\ast}{\rightarrow} X$. In a
  similar manner we show that $X\stackrel{\ast}{\rightarrow} x$ and we are done.
  \hfill$\Box$
\end{proof}
\begin{lemma}
  \label{down}
  Let $C$ be a Boolean automata circuit of size $n$. Then $ \forall x,y\in
  \{0,1\}^n,$
  $$
  u(x)>u(y)\ \Rightarrow\ [x\stackrel{\ast}{\rightarrow} y]\ \vee \neg
  [y\stackrel{\ast}{\rightarrow} x].  
  $$
\end{lemma}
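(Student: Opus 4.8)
The plan is to reduce the entire statement to a single monotonicity fact: the quantity $u(\cdot)$ never increases along an arc of the general iteration graph. Concretely, I would first prove that for every $x\in\{0,1\}^n$ and every $P\subseteq V$ one has $u(F^P(x))\leq u(x)$. Granting this, monotonicity propagates along directed paths, so $y\stackrel{\ast}{\rightarrow}x$ forces $u(x)\leq u(y)$; contrapositively, $u(x)>u(y)$ already implies $\neg[y\stackrel{\ast}{\rightarrow}x]$. Since the conclusion of the lemma is the disjunction $[x\stackrel{\ast}{\rightarrow}y]\vee\neg[y\stackrel{\ast}{\rightarrow}x]$, establishing its second disjunct suffices and the lemma follows at once (the first disjunct is never actually needed).

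So the real work is the one-step inequality $u(F^P(x))\leq u(x)$. The key observation is that updating an \emph{unstable} node always flips its bit: if $i\in U(x)$ then $f_i(x_{i-1})\neq x_i$, hence $f_i(x_{i-1})=\neg x_i$ because $f_i\in\{id,neg\}$. Thus, writing $Q=P\cap U(x)$, the configuration $y=F^P(x)$ is obtained from $x$ by flipping exactly the bits indexed by $Q$, and as $P$ ranges over subsets of $V$, $Q$ ranges over all subsets of $U(x)$. I would introduce the indicators $t_i=[\,i\in U(x)\,]$, $q_i=[\,i\in Q\,]$ and the cyclic ``boundary'' indicator $s_i=q_i\oplus q_{i-1}$. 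Writing $f_i(a)=a\oplus c_i$ with $c_i\in\{0,1\}$, a direct computation gives $[\,i\in U(y)\,]=y_i\oplus y_{i-1}\oplus c_i=(x_i\oplus x_{i-1}\oplus c_i)\oplus(q_i\oplus q_{i-1})=t_i\oplus s_i$, so that $u(y)=\sum_i(t_i\oplus s_i)$ while $u(x)=\sum_i t_i$. Note the constants $c_i$ cancel, so this holds for any circuit, positive or negative, and there is no need to invoke Lemma~\ref{iso} to restrict to the canonical circuit.

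It then remains to show $\sum_i (t_i\oplus s_i)\leq \sum_i t_i$ under the constraint $q_i\leq t_i$ (which expresses $Q\subseteq U(x)$). Computing termwise, $u(y)-u(x)=\#\{i:t_i=0,\ s_i=1\}-\#\{i:t_i=1,\ s_i=1\}$, so I must check that $Q$ has at least as many boundaries at unstable positions as at stable ones. I would argue this by decomposing $Q$ into its maximal cyclic runs of consecutive nodes: each run $\{a,\ldots,b\}$ contributes exactly two boundary positions, one at its start $a$ and one just past its end at $b+1$. The start boundary sits at $a\in Q\subseteq U(x)$, hence at a position with $t_a=1$; the end boundary at $b+1$ lies at a position that may or may not be unstable. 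Consequently every run contributes at least one boundary to $\{t_i=1,\ s_i=1\}$ and at most one to $\{t_i=0,\ s_i=1\}$, whence $\#\{t_i=0,s_i=1\}\leq\#\{\text{runs}\}\leq\#\{t_i=1,s_i=1\}$ and $u(y)\leq u(x)$.

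The main obstacle is precisely this last counting step: because $F^P$ updates all of $P\cap U(x)$ \emph{simultaneously}, one cannot naively reduce to single-node updates (which would require intermediate, possibly re-stabilised, configurations), so the inequality must be controlled globally, and the run/boundary bookkeeping is what makes it work. Everything else---the XOR identity and the logical reduction of the disjunction to its second disjunct---is routine.
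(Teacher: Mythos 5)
Your proof is correct for the statement as literally written, and it takes a genuinely different route from the paper's. The paper reduces to the canonical positive circuit via Lemma~\ref{iso} and then treats the two disjuncts as two separate claims, proving \emph{both}: the reachability $x\stackrel{\ast}{\rightarrow}y$ is obtained from Lemma~\ref{horiz} together with an explicit one-step descent $(10)^{k}10\,0^{n-2k-2}\stackrel{\ast}{\rightarrow}(10)^{k}0^{n-2k}$ (with a separate construction when $u(y)=0$), while the non-reachability $\neg[y\stackrel{\ast}{\rightarrow}x]$ is argued only informally (``sections of consecutive $1$s cannot be made to appear inside a section of consecutive $0$s''). You instead prove the single quantitative inequality $u(F^P(x))\leq u(x)$ by the XOR/run-boundary count and deduce the second disjunct globally. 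Your computation is sound: updating $P$ amounts to flipping exactly the bits of $Q=P\cap U(x)$, the instability indicator transforms as $t_i\mapsto t_i\oplus s_i$ with $s_i=q_i\oplus q_{i-1}$, and since every maximal cyclic run of $Q$ starts at a position of $U(x)$, the boundary count gives $u(F^P(x))-u(x)\leq 0$. (One edge case: if $Q=V$, runs have no start or end, but then $s\equiv 0$ and $u$ is unchanged, so nothing breaks.) What your route buys is rigour and generality: it supplies exactly the justification that the paper's ``easy to see'' step lacks, and it works for an arbitrary circuit without normalising to the canonical one, since the signs $c_i$ cancel; it is also consistent with, and refines, Lemma~\ref{u_par} (for $Q=U(x)$ the two boundary counts coincide and $u$ is preserved).

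What the paper's route buys --- and what you should be aware of --- is the other half. The $\vee$ in the statement is almost certainly a typo for $\wedge$: the paper's own proof explicitly establishes both ``parts'', and the surrounding text uses both, since non-ascent alone makes $u$ a potential function, but downward reachability is what justifies the layered picture of ${\cal D}(C)$ and the later claim that the least-potential configurations (the fixed points, in the positive case) are essentially the only possible outcome for an isolated circuit. Under that intended reading, your argument establishes only the second conjunct; the descent $x\stackrel{\ast}{\rightarrow}y$ genuinely needs an argument of the paper's kind (use Lemma~\ref{horiz} to bring two unstable nodes into adjacent positions, then a single-node update removes both of them from $U$), and nothing in your monotonicity inequality yields it.
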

\begin{proof}
  Again, let us prove Lemma~\ref{down} in the case where $C$ is the canonical
  positive circuit.  Considering Lemma~\ref{horiz}, the first part of
  Lemma~\ref{down}, when $u(y)>0$, comes from:
  $$
  X=(10)^{k}.10.0^{n-2k-2}\stackrel{\ast}{\rightarrow}Y=F^{\{2k\}}(X)=
  (10)^{k}0^{n-2k}
  $$
  where $u(X)=2\cdot (k+1)$ and $u(Y)=2\cdot k$.  If $u(y)=0$, let
  $X=10^{n-1}$. Then, necessarily, $y=0^{n}=F^{\{0\}}(X)$ or
  $y=1^n=F^{\{n-1\}}\circ \ldots\circ F^{\{2\}}\circ F^{\{1\}}(X)$.  Now,
  suppose that the second part of Lemma~\ref{down} is false. Again, from
  Lemma~\ref{horiz}, this means that
  \begin{equation}
    \label{up}
    Y=(10)^{k}.00.0^{n-2k-2k}\stackrel{\ast}{\rightarrow}X=(10)^{k}.10.0^{n-2k-2}.
  \end{equation}
  But as it is easy to see that although the sizes of sections of consecutive
  $1$s can be increased or decreased ``at will'', these sections cannot be made
  to appear inside a section of consecutive $0$s. Equation~\ref{up} is therefore
  impossible.\hfill$\Box$
\end{proof}
Following Lemma~\ref{down}, we may notice that $u(\cdot)$ serves as an obvious
potential function. Informally, the most {\em robust} and {\em likely} (in the
sense mentioned in section~\ref{appl}) or most {\em stable} configurations $x$
are those of lesser potential $u(x)$.
\begin{lemma}
  \label{ck}
  Let $C$ be a Boolean automata circuit of size $n$ and let $U_k=\{x\in
  \{0,1\}^n\ |\ u(x)=k\}$. Then, for any $k\leq u_{max},$
  $$
  u_k\ =\ |U_k|\ =\ 2\times \binom{n}{k}.
  $$
\end{lemma}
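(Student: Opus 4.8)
The plan is to reduce everything to the canonical circuits and then count configurations directly through their ``transition set'' $U(x)$.

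First I would invoke Lemma~\ref{iso}. Its isomorphism $\sigma$ acts coordinate-wise: each $\sigma_i$ is either the identity or the negation, and the choice depends only on $i$ and not on the rest of $x$. From this one checks at once that $i\in U(\sigma(x))\Leftrightarrow i\in U(x)$, so $\sigma$ preserves $u$ and carries $U_k$ of one circuit bijectively onto $U_k$ of the other. Hence it suffices to prove the formula for the canonical positive circuit (for even $k$) and for the canonical negative circuit (for odd $k$), the two cases being entirely parallel. For the canonical positive circuit every $f_i=id$, so $U(x)=\{i\in V\mid x_i\neq x_{i-1}\}$ (indices mod $n$), i.e. $u(x)$ is exactly the number of cyclic ``transitions'' of $x$.

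The core step is a bijection, which I would phrase via the difference map $D:\{0,1\}^n\to\{0,1\}^n$, $D(x)_i=x_i\oplus x_{i-1}$. This is $\mathbb{F}_2$-linear, its kernel is the set of constant configurations $\{0^n,1^n\}$, and its image is the subspace of even-weight vectors (since $\sum_i D(x)_i=0$); thus $D$ is two-to-one onto the even-weight vectors. As the indicator vector of $U(x)$ is precisely $D(x)$, we have $u(x)=k$ iff $D(x)$ has Hamming weight $k$. For even $k\le u_{max}$ every weight-$k$ vector lies in the image and has exactly two preimages, while for odd $k$ there are none (this recovers Lemma~\ref{evenodd}). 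Since there are $\binom{n}{k}$ vectors of weight $k$, we obtain $u_k=|U_k|=2\binom{n}{k}$. Equivalently and more combinatorially: a configuration is fixed by the single bit $x_0$ together with its transition set $U(x)$, reconstructed by flipping the state at each crossed transition; consistency on returning to position $0$ forces $|U(x)|$ to be even, and conversely every even subset $S\subseteq V$ occurs, so the configurations with $u(x)=k$ biject with pairs (a choice of $x_0\in\{0,1\}$, an even $k$-subset $S$), of which there are $2\binom{n}{k}$.

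The hard part will be only the surjectivity-plus-two-to-one claim for $D$, namely pinning down that its kernel is exactly $\{0^n,1^n\}$ and its image is exactly the even-weight code, together with the bookkeeping that the admissible parity of $k$ matches the sign of the circuit (so that the statement of Lemma~\ref{ck} should be read for $k$ of the parity dictated by the sign, the count being $0$ otherwise). The negative canonical circuit is then handled identically after replacing $D$ by the affine map $x\mapsto(x_i\oplus x_{i-1}\oplus\epsilon_i)$, where $\epsilon$ encodes the single negative arc; its image is the odd-weight coset, giving the same count $2\binom{n}{k}$ for odd $k$.
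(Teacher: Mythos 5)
Your proof is correct, and at its core it performs the same count as the paper, but it is genuinely more careful and fills a real gap. The paper's proof is exactly your ``equivalently and more combinatorially'' sentence: a configuration $x$ with $u(x)=k$ is completely determined by the pair $(U(x),x_{i_1})$, whence $\binom{n}{k}$ choices of subset times $2$ choices of bit. What the paper never verifies is surjectivity --- that every $k$-subset paired with a bit is actually realized by some configuration --- and that is precisely where parity intervenes: reconstructing $x$ around the circuit from a prescribed $U(x)$ is consistent upon return to the starting node exactly when $k$ has the parity dictated by the sign of the circuit (this is Lemma~\ref{evenodd} again). Read literally, the paper's argument would give $u_1=2n$ for a positive circuit, where in fact $u_1=0$; as you note, the lemma's ``for any $k\leq u_{max}$'' must be restricted to $k$ of the admissible parity, the count being $0$ otherwise. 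Your $\mathbb{F}_2$-linear route (the difference map $D$ with kernel $\{0^n,1^n\}$ and image the even-weight code, hence two-to-one onto it, plus the affine shift for the negative case) packages injectivity, surjectivity and the parity obstruction in one stroke. The price is the preliminary reduction to canonical circuits via Lemma~\ref{iso}; that reduction is sound --- your claim that $\sigma$ preserves $U(x)$ elementwise follows from injectivity of $\sigma$ and the identity $G^{\{i\}}(\sigma(x))=\sigma(F^{\{i\}}(x))$ --- but the paper does not need it, since its reconstruction argument works verbatim for arbitrary $f_i\in\{id,neg\}$. Net effect: same counting idea, but your version proves the correctly qualified statement, whereas the paper's one-line proof tacitly assumes every $k$-subset can occur as $U(x)$.
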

\begin{proof} 
  A configuration $x$ such that $u(x)=k$ is completely defined by the set
  $U(x)=\{i_1,\ldots,i_{k}\}$ and by the value of the state of one node, say
  nodes $i_1$ (indeed, $\forall i\notin U(x), x_i=f_{i}(x_{i-1})$, and $\forall
  i\in U(x), x_i\neq f_{i}(x_{i-1})$). Lemma~\ref{ck} comes from the fact that
  there are $\binom{n}{k}$ ways to choose the $k$ nodes of $U(x)$ and $2$ ways
  to choose the state of node $i_1$ (either $1$ or $0$). \hfill$\Box$
\end{proof}
Note that $u(x)=0$ is equivalent to $x$ being a fixed point and as it is well
known and easy to see that a positive circuit has two fixed points $x$ and
$\overline{x}$ ($x=0^n$ and $\overline{x}=1^n$ in the case of the canonical
positive circuit). This confirms $u_0=2\times \binom{n}{0}=2$. In the negative
case, according to Lemma~\ref{ck}, the set of configurations of minimal
$u(\cdot)$ is of size $2\times \binom{n}{1}=2\cdot n$.
In the general case, following Lemma~\ref{ck}, we may notice, as did the
authors of~\cite{Remy}, that if $x\in U_k$, then $\overline{x}\in U_k$.
\medskip

Resulting from the previous remarks and lemmas, the general iteration graph
${\cal D}(C)$ representing the dynamics of any Boolean automata circuit $C$ can
be represented by a layered graph as in figure~\ref{couches}. Layer of $k\leq
u_{max}$ of this graph contains all configurations of $U_k$.  \medskip
\hspace{-10cm}~\begin{figure}[htbp!]
  \begin{center}
    \hspace{-3cm}\scalebox{0.5}{\input{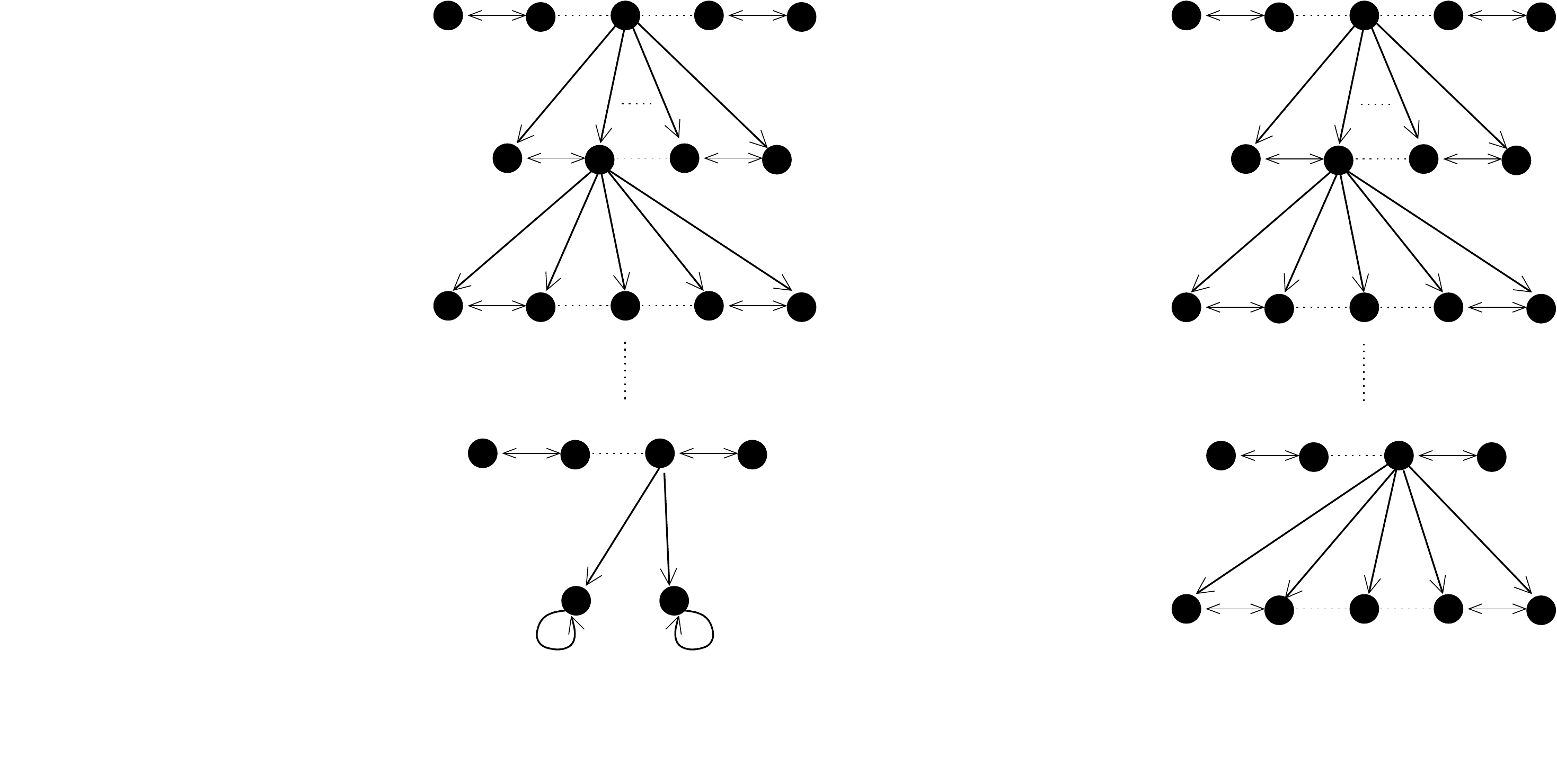_t}}
    \caption{{\small General iteration graphs ${\cal D}(C)$ of an arbitrary
        positive Boolean automata circuit $C$ of size $n$ (on the left) and of
        an arbitrary negative Boolean automata circuit $C$ of size $n$ (on the
        right).}}
  \end{center}
  \label{couches}
\end{figure}

In~\cite{Remy}, Remy {\it et al.} concentrated on synchronous and asynchronous
updates. They called, respectively, ``synchronous graph'' and ``asynchronous
graph'' the iteration graphs of circuits with these updates. Both these graphs
are sub-graphs of the general iteration graph of a circuit.  More precisely,
arcs of a synchronous graph are the arcs of the general iteration graph that are
labeled by $V$ and arcs of the asynchronous graphs are the arcs of the general
iteration graphs that are labeled by a subset $P\subseteq V$ containing just one
node ($|P|=1$). As we did here for general iteration graphs, Remy {\it et al.}
showed that synchronous and asynchronous graphs are also layered graphs whose
layers are characterised by the value of $u(\cdot)$.
Let us seize the opportunity here to mention with no further detail that
in~\cite{Pcircuits}, we described exhaustively the synchronous graph by focusing
on and characterising attractors and in~\cite{Remy} the authors do the same
thing by different means exploiting the sets $U_k,\ k\leq u_{max}$. The
descriptions produced by these two analyses of the dynamics of circuits updated
in parallel are necessarily different but combining them yields some
supplementary precisions concerning the sets $U(x)$ for configurations $x$
belonging to attractors of Boolean automata circuits updated synchronously.
\medskip

In~\cite{Thomas1981}, Thomas formulated two conjectures establishing the
relationship thought to exist between the structure of a network and its
dynamical behaviour. The first of these conjectures claiming that a network
needs to contain positive circuits in order for it to have fixed points, is now
a well known fact that has been proven true in many
contexts~\cite{conjT1,conjT2,conjT3,conjT4,conjT5,RRT,Thomas07} (it is besides
confirmed again by the fact that there are no $u(x)=0$ potential configurations
of a negative circuit). The second conjecture, on the other hand, states that
negative circuits are necessary to have limit cycles. This also has been proven
in some contexts~\cite{conjT1,conjT2,conjT3,conjT6, AR1,AR2,AR3,RRT,Thomas07}
However, in our context of Boolean automata networks, the results we found
in~\cite{Pcircuits} and in~\cite{BScircuits} have proven it to be false. Indeed,
we have shown that positive Boolean automata networks updated with
block-sequential update schedules do have limit cycles as long as the update
schedule is not one of the $n$ sequential update schedules $s$ such that
$\exists i\leq n,\ s(i)=0, s(i+1)=1,\ldots,s(i+n-1)=n-1$. This inconsistency of
our model with those in agreement with the second Thomas conjecture disappears
if we consider, as we did earlier, that it is highly improbable that no errors
ever occur in the fixed update order. If we do, then, as the general iteration
graphs of circuits signify, fixed points having the least potential are almost
the only possible outcome of the evolution of an isolated positive circuit.

\section{Discussion}
Assuming that neither total asynchrony nor perfect obedience to a
block-sequential update mode are probable, we believe that general iteration
graphs convey more realistic information on a network dynamics than do simple
iteration graphs of particular block-sequential update schedules. Indeed, as we
have endeavoured to show through the analyses of the Mendoza network in
Section~\ref{appl} and of arbitrary circuits in Section~\ref{circ}, they reveal
that some behaviours of networks that seam dynamically stable with a specific
block-sequential update mode, become very less stable when the update mode may
occasionally be disrupted. This is particularly clear with Boolean automata
circuits for which, as we have seen above, there exists a rather straightforward
potential function of network configurations. For more elaborate networks such
as the Mendoza network, it is less obvious.  From the functions of robustness and
likeliness that we have defined in Section~\ref{appl}, however unsubtle may they
be, we have still managed to derive that some attractors that are made possible
by a given block-sequential update schedule may be much less probable and thus
much less meaningful from a biological point of view than others.  We wish,
however, to develop and refine these notions of robustness and likeliness in the
hope of extending the potential function of circuits to other networks.

General iteration graphs have one major drawback: their sizes. They have $2^n$
nodes and $2^n\times(2^n-1)$ arcs (where $n$ is the network size), that is,
$2^n-1$ arcs more than iteration graphs of block-sequential update
schedules. This problem has yet to be dealt with before we may start considering
to compute the general iteration graphs of arbitrary networks whose sizes may be
much bigger than, for instance, the ones of the two strongly connected
components of the Mendoza network that we have studied in
Section~\ref{appl}. However, like Thomas~\cite{Thomas1981}, we believe that
understanding exhaustively the dynamics of circuits is a step towards building
an understanding of that of arbitrary networks. The knowledge of the general
iteration graphs of circuits that we now have may allow us eventually to bypass
the costly construction of the general iteration graphs of other networks by
focusing on these motifs in the networks structures.

\bibliography{GIG}
\end{document}